\newcommand{\NP}{{\sf NP}}
\title{Hard Problems That Quickly Become Very Easy} 
\titlerunning{On the Complexity of Graph Problems for Hereditary Graph Classes}
\author{Barnaby Martin}{Department of Computer Science, Durham University, United Kingdom}{barnaby.d.martin@durham.ac.uk}{}{}
\author{Dani\"el Paulusma}{Department of Computer Science, Durham University, United Kingdom}{daniel.paulusma@durham.ac.uk}{0000-0001-5945-9287}{supported by the Leverhulme Trust (RPG-2016-258).}
\author{Siani Smith}{Department of Computer Science, Durham University, United Kingdom}{siani.smith@durham.ac.uk}{}{}
\authorrunning{B. Martin, D. Paulusma and S. Smith }
\keywords{computational complexity; hereditary graph class; $H$-free}
\newcommand{\problemdef}[3]{
	\begin{center}
		\begin{boxedminipage}{.99\textwidth}
			\textsc{{#1}}\\[2pt]
			\begin{tabular}{ r p{0.8\textwidth}}
				\textit{~~~~Instance:} & {#2}\\
				\textit{Question:} & {#3}
			\end{tabular}
		\end{boxedminipage}
	\end{center}
}
\begin{document}

\maketitle

\begin{abstract}
A graph class is hereditary if it is closed under vertex deletion.
We give examples of \NP-hard, PSPACE-complete and NEXPTIME-complete
problems that become constant-time solvable for every hereditary graph class that is not equal to the class of all graphs.
\end{abstract}

\keywords{vertex colouring, $H$-free graph, diameter}

\section{Introduction}\label{s-intro}

Many discrete optimization problems that can be defined on graphs are computationally hard but may become tractable if we restrict the input to some special graph class. One of the most natural properties of a graph class is to be closed under vertex deletion. 
A graph class with this property is also called {\it hereditary}. Hereditary graph classes contain many well-studied graph classes, and moreover, they enable a systematic study of a graph problem under input restrictions, as we explain below. 

For some graph $H$, a graph $G$ is {\it $H$-free} if $G$ does not contain $H$ as an induced subgraph, that is, $G$ cannot be modified into $H$ by a sequence of vertex deletions. For a set of graphs ${\cal H}$, a graph $G$ is
{\it ${\cal H}$-free} if $G$ is $H$-free for every $H\in {\cal H}$. It is well known (and not difficult to see) that a graph class ${\cal G}$ is hereditary if and only if ${\cal G}$ is ${\cal F}_{\cal G}$-free for some unique set  ${\cal F}_{\cal G}$ of minimal forbidden induced subgraphs.\footnote{The set ${\cal F}_{\cal G}$ may have infinite size. For example, if ${\cal G}$ is the class of bipartite graphs, then ${\cal F}_{\cal G}$ consists of all cycles of odd length.}
For a systematic study, one may first focus on hereditary graph classes ${\cal G}$ for which ${\cal F}_{\cal G}$ has small size.  Apart from developing new methodology and obtaining a deeper understanding of computational hardness, this should ideally lead to {\it dichotomy theorems}. Such theorems tell us exactly under which input restrictions a graph problem can be solved efficiently and under which restrictions the problem stays computationally hard.

The borderline between hardness and tractability is often far from clear beforehand and jumps in computational complexity can be extreme. In order to illustrate this kind of behaviour we gave, in~\cite{MPS19}, an example of an {\it \NP-hard graph} problem that becomes {\it constant-time} solvable for {\it every} hereditary graph class that is not equal to the class of all graphs. This problem was related to vertex colouring and being a universal graph (a graph is {\it universal} for some graph class~${\cal G}$ if it contains every graph from ${\cal G}$ as an induced subgraph).

In this note we expand on the extremities in computational complexity jumps.
In Section~\ref{s-ex1} we recall the above \NP-hard example of~\cite{MPS19}, as this example shows the essence of our constructions (we now also show that this problem belongs to $\Sigma^{\mathrm{P}}_2$). Then, in Section~\ref{s-ex2}, we give another example of such a problem, which is related to vertex colouring reconfiguration and universal graphs. However, for this problem we can prove that it is even PSPACE-complete for general graphs. Finally, in Section~\ref{s-ex3}, we move away from problems defined on graphs and introduce a circuits problem that is even NEXPTIME-complete in general; note that for the latter problem no polynomial-time algorithm can exist, whereas for the first two problems such a claim only holds subject to standard complexity assumptions.

Throughout the paper, the notion of graph colouring plays an important role. A {\it $k$-colouring} of $G$ is a function $c:V\to \{1,\ldots,k\}$ such that for every two adjacent vertices $u$ and $v$ we have that $c(u)\neq c(v)$. We also say that $c(u)$ is the {\it colour} of $u$. For a fixed $k$ (that is, $k$ is not part of the input), the corresponding decision problem {\sc $k$-Colouring} is to decide if a given graph has a $k$-colouring. See~\cite{GJPS17} for a survey on graph colouring for hereditary graph classes.
The {\it disjoint union} $G+H$ of two graphs $G$ and $H$ is the graph $(V(G)\cup V(H),E(G)\cup E(H))$.

\section{NP-Hardness: Graph Colouring}\label{s-ex1}

We define the following decision problem and then present our first result, whose proof will serve as a basis for our other proofs.

\problemdef{{\sc Colouring-or-Subgraph}}{an $n$-vertex graph $G$}{is $G$ $\lceil \sqrt{\log n} \rceil$-colourable or $H$-free for some graph $H$ with $|V(H)|\leq  \lceil \sqrt{\log n} \rceil$?} 

\begin{figure} [h]
	\resizebox{8.5cm}{!} {
			\begin{tikzpicture}[main_node/.style={circle,draw,minimum size=1cm,inner sep=3pt]}]
\node[main_node](g1) at (0,0){};
\node[main_node](g2) at (1.5,0){};
\node[main_node](g3) at (3,0){};
\draw (-1,-1) rectangle (4,2);
\node[main_node](p1) at (0,3){};
\node[main_node](p2) at (1.5,3){};
\node[main_node](p3) at (3,3){};
\node(g) at (-2,0) {\Huge$G$};
\node[circle, draw, minimum size=2.5cm](H1) at (7,0){\Huge$H_1$};
\node[circle, draw, minimum size=2.5cm](H2) at (10,0){\Huge$H_2$};
\node[circle,draw, minimum size=2.5cm](H3) at (13,0){$\dots$};
\node[circle, draw, minimum size=2.5cm](H4) at (16,0){\Huge$H_r$};
\draw(p1)--(g1);
\draw(p1)--(g2);
\draw(p1)--(g3);
\draw(p2)--(g1);
\draw(p2)--(g2);
\draw(p2)--(g3);
\draw(p3)--(g1);
\draw(p3)--(g2);
\draw(p3)--(g3);
\draw(p1)--(p2)--(p3);
\draw(p1) to [out=90, in=90] (p3);

\end{tikzpicture}
}
\caption{An example of a graph $G^{\prime}$,where $p=6$. The graph $G^*$ is the connected component on the left. Not all vertices of the subgraph $G$ of $G^*$ are drawn and not all graphs $H_i$ on $p$ vertices are displayed.}\label{f-ex}
\end{figure}
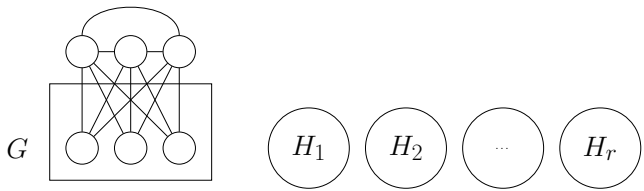

\begin{theorem}\label{t-0}
The {\sc Colouring-or-Subgraph} problem is \NP-hard, but constant-time solvable for every hereditary graph class not equal to the class of all graphs.
\end{theorem}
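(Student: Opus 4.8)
The statement splits into an \NP-hardness claim and a constant-time claim, which I would prove separately. For hardness I would reduce from \textsc{$3$-Colouring}. Given an $N$-vertex graph $G$, the plan is to build in polynomial time a graph $G'$ on $n'$ vertices with two properties, where $p=\lceil\sqrt{\log n'}\rceil$: first, $G'$ contains \emph{every} graph on at most $p$ vertices as an induced subgraph, so that the disjunct ``$G'$ is $H$-free for some $H$ with $|V(H)|\le p$'' is forced to fail; and second, $G'$ is $p$-colourable if and only if $G$ is $3$-colourable. To achieve the first property I would add, as separate connected components, one copy of every graph on exactly $p$ vertices (the graphs $H_1,\dots,H_r$ of Figure~\ref{f-ex}); since any graph on fewer than $p$ vertices is an induced subgraph of such a graph padded with isolated vertices, $G'$ becomes universal for all graphs on at most $p$ vertices. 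To achieve the second property I would let the remaining component $G^*$ be $G$ together with a clique on $p-3$ fresh vertices, each joined to all of $G$.

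Given these two properties the reduction is immediate: universality makes the second disjunct false, so the answer on $G'$ is exactly ``$G'$ is $p$-colourable''. Each $p$-vertex component $H_i$ is trivially $p$-colourable, an isolated padding vertex even more so, and $G^*$ is $p$-colourable precisely when $G$ is $3$-colourable, as the clique on $p-3$ vertices exhausts $p-3$ colours that its neighbours in $G$ cannot reuse, leaving exactly three for $G$. Hence $G'$ is a yes-instance iff $G$ is $3$-colourable. The main obstacle is the bookkeeping that ties $p$ to $n'$: $p$ is defined through $n'$, yet $n'$ depends on the gadget we build for a given $p$. I would resolve this by choosing $p=\Theta(\sqrt{\log N})$, the regime in which the number of graphs on $p$ vertices, about $2^{\binom{p}{2}}=2^{\Theta(p^2)}$, is polynomial in $N$, so that both the running time of the reduction and $n'$ stay polynomially bounded. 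To pin the equality $\lceil\sqrt{\log n'}\rceil=p$ exactly, I would use that the set of admissible values of $n'$, namely $(2^{(p-1)^2},2^{p^2}]$, is extremely wide (a multiplicative gap of $2^{2p-1}$), and pad $G'$ with isolated vertices, which affect neither colourability nor universality, until $n'$ lands inside it. Verifying that a suitable $p$ exists and that $p\ge 3$ for all but finitely many inputs (the rest handled directly) is then a routine estimate.

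For the constant-time claim, let ${\cal G}$ be hereditary with ${\cal G}$ not the class of all graphs, and let $F\in{\cal F}_{\cal G}$ be any minimal forbidden induced subgraph, with $f=|V(F)|$ a constant depending only on ${\cal G}$ (such an $F$ exists precisely because ${\cal G}$ is not the class of all graphs). The key observation is that every $G\in{\cal G}$ is $F$-free. Thus, as soon as $\lceil\sqrt{\log n}\rceil\ge f$, that is, once $n>2^{(f-1)^2}$, the graph $F$ itself witnesses the second disjunct, since $|V(F)|=f\le\lceil\sqrt{\log n}\rceil$ and $G$ is $F$-free, so the answer is automatically yes. For the finitely many $G\in{\cal G}$ with $n\le 2^{(f-1)^2}$ the input has bounded size, so a brute-force check (or a precomputed table) decides the instance in constant time. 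Together these give a constant-time algorithm for \textsc{Colouring-or-Subgraph} on ${\cal G}$.
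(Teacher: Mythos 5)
Your proposal is correct and follows essentially the same route as the paper's own proof: a reduction from \textsc{$3$-Colouring} that attaches a dominating clique of $p-3$ vertices to $G$, adds disjoint copies of all graphs on exactly $p$ vertices to kill the $H$-freeness disjunct, and pads with isolated vertices to pin $p=\lceil\sqrt{\log n'}\rceil$ (the paper fixes this by taking $p=\lceil\sqrt{\log 3n}\rceil$ and padding to exactly $3n$ vertices, which is just a concrete instantiation of your interval argument). The constant-time argument is likewise identical, using a forbidden induced subgraph of ${\cal G}$ as the witness once $n$ exceeds a constant threshold.
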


\begin{proof}
To prove \NP-hardness we reduce from {\sc $3$-Colouring}, which we recall is \NP-complete~\cite{Lo73}. 
Let $G$ be an $n$-vertex graph. Set 
$p=\lceil \sqrt{\log 3n} \rceil$. 
We may assume without loss of generality that $p\geq 4$.
Add $p-3$ pairwise adjacent vertices to $G$. Make the new vertices also adjacent to every vertex of $G$. 
We denote the new graph by $G^*$.
Let $\{H_1,\ldots,H_r\}$ be the set of all graphs with exactly $p$ vertices.
We now define the graph $G'$ as the disjoint union of $G^*$ and the graphs $H_1,\ldots,H_r$; 
see also Figure~\ref{f-ex}.
Note that the number of vertices of the graph $H_1+\ldots+H_r$ is at most
$p2^{\frac{p(p-1)}{2}} \leq \lceil\sqrt{\log 3n} \rceil \cdot \sqrt{3n}\leq n$ as $p\geq 4$.  
This implies that the
number of vertices in $G'$ is
$$|V(G')|=|V(G)|+p-3+|V(H_1)|+\ldots + |V(H_r)| \leq n+(p-3)+ n < 3n.$$
\noindent
In particular, the above shows that the number of vertices in $G'$ is bounded by a polynomial in $n$.
We now add $3n-|V(G'|$ isolated vertices to $G'$ such that $G'$ has exactly $3n$ vertices.

We claim that $G$ is $3$-colourable if and only if $G'$ is a is a yes-instance of {\sc Colouring-or-Subgraph}.
First suppose that $G$ is $3$-colourable. We give each of the $p-3$ vertices of $G^*$ that is not in $G$ a unique colour from $\{4,\ldots,p\}$. As $G$ is $3$-colourable, we find that 
$G^*$ is $p$-colourable. 
As $G'$ is the disjoint union of $G^*$ and the graphs $H_1,\ldots,H_r$ 
(and some isolated vertices, which we give colour~$1$),
we must now consider the graphs $H_1,\ldots,H_r$.
By construction, each $H_i$ has $p$ vertices, so we can give each vertex of each $H_i$ a colour from $\{1,\ldots,p\}$ that is not used on any other vertex of $H_i$. Hence, we find that $G'$ is $p$-colourable. 
As 
$p=\lceil \sqrt{\log 3n} \rceil=\lceil \sqrt{\log |V(G')|} \rceil$, 
this implies that $G'$ is a yes-instance of {\sc Colouring-or-Subgraph}.

Now suppose that $G'$ is a yes-instance of {\sc Colouring-or-Subgraph}. 
Recall that $G'$ is the disjoint union of the graph $G^*$, the graphs $H_1,\ldots,H_r$ and some isolated vertices, and recall also that the graphs
$H_1,\ldots,H_r$ are all the graphs on exactly $p$ vertices.
Hence, $G'$ contains every graph on at most $p$ vertices as an induced subgraph.
In other words, $G'$ is not $H$-free for some graph $H$ with $|V(H)|\leq p$.
As $G'$ is a yes-instance of {\sc Colouring-or-Subgraph} and $p=\lceil \sqrt{\log 3n} \rceil=\lceil \sqrt{\log |V(G')|} \rceil$, this means that $G'$ must be $p$-colourable. 
As the $p-3$ vertices of $V(G^*)\setminus V(G)$ form a clique,
we may assume without loss of generality that they are coloured $4,\ldots,p$, respectively. All these $p-3$ vertices are adjacent to every vertex of $G$ in $G^*$. Consequently, every vertex of $V(G)$ 
must have received a colour from the set $\{1,2,3\}$. Hence, $G$ is $3$-colourable.
 
\medskip
\noindent
We now prove the second part of the theorem. 
Let ${\cal G}$ be a hereditary graph class that is not the class of all graphs.
Then there exists at least one graph $H$ such that every graph $G\in {\cal G}$ is $H$-free. 
Let $\ell=|V(H)|$.
We claim that {\sc Colouring-or-Subgraph} is constant-time solvable for ${\cal G}$. Let $G\in {\cal G}$ be an $n$-vertex graph.
If $n \leq 2^{\ell^2}$, then $G$ has constant size and the problem is constant-time solvable.
If $n>2^{\ell^2}$, then
$$|V(H)|=\ell< \sqrt{\log n}  \leq \lceil \sqrt{\log n} \rceil.$$
Hence $G$ is a yes-instance of {\sc Colouring-or-Subgraph}, as $G$ is $H$-free and $H$ has at most $\lceil \sqrt{\log n} \rceil$ vertices.
\end{proof}

\noindent
We do not know if {\sc Colouring-or-Subgraph} is in $\NP=\Sigma^{\mathrm{P}}_1$. The problem arises when we try to check if an input $G$ is $H$-free for some particular $H$, of size (say) $\lceil \sqrt{\log n} \rceil$, which takes time $n^{\lceil \sqrt{\log n} \rceil}$ by brute force. Note that it is crucial for our proof that the size of $H$ depends on a function of $n$. 
We can however show that the problem belongs to the class $\Sigma^{\mathrm{P}}_2$ 
(a language $L$ is in $\Sigma^{\mathrm{P}}_2$ if there exists a polynomial-time predicate $P$ and a polynomial $q$ such that a string $x$ belongs to $L$ if and only if there exists a string $y$ of length $q(|x|)$ such that for every string $z$ of length $q(|x|)$, $P(x,y,z)=1$).

\begin{theorem}
{\sc Colouring-or-Subgraph} is in $\Sigma^{\mathrm{P}}_2$.
\end{theorem}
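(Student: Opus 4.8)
The plan is to write the problem as a single $\exists\,\forall$ statement of the required form. Write $k \df \lceil \sqrt{\log n}\rceil$, so that an $n$-vertex graph $G$ is a yes-instance precisely if (i) $G$ is $k$-colourable, or (ii) there is a graph $H$ with $|V(H)|\le k$ that is \emph{not} an induced subgraph of $G$. Condition (i) is the usual \NP-witness condition $\exists\,c$ (a proper $k$-colouring), while condition (ii) is naturally of the form $\exists\,H\,\forall\,S$, where $S$ ranges over vertex subsets of $G$ and the inner predicate asserts that $G[S]$ is not isomorphic to $H$. Since both conditions fit under a leading existential quantifier followed by a universal one, I would merge them into a single $\Sigma^{\mathrm{P}}_2$ certificate.

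Concretely, the existential string $y$ would encode a flag together with either a candidate $k$-colouring $c\colon V(G)\to\{1,\dots,k\}$ or a candidate graph $H$ on at most $k$ vertices; note that $H$ is describable by $O(k^2)=O(\log n)$ bits and $c$ by $O(n\log k)$ bits, so both fit into a string of length polynomial in $|x|$. The universal string $z$ would encode a vertex subset $S\subseteq V(G)$. The polynomial-time predicate $P(x,y,z)$ would then branch on the flag: in the colouring branch it ignores $z$ and accepts iff $c$ is a proper $k$-colouring of $G$; in the subgraph branch it accepts iff $G[S]\not\cong H$, accepting automatically when $|S|\ne|V(H)|$ since then no isomorphism is possible (and rejecting outright on a malformed $y$). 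A routine check shows that $\exists\,y\,\forall\,z\,P(x,y,z)=1$ holds exactly when (i) or (ii) holds: a proper colouring makes the colouring branch accept for every $z$; a graph $H$ witnessing $H$-freeness makes the subgraph branch accept for every $S$ (ranging over all sizes is equivalent to ranging over size $|V(H)|$); and conversely an accepting $y$ of either type yields (i) or (ii) respectively.

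The one step that needs care---and the crux of the argument---is verifying that $P$ runs in polynomial time, which reduces to testing whether $G[S]\cong H$ for two graphs on at most $k$ vertices. Brute-force isomorphism testing of such graphs costs $O(k!\cdot\mathrm{poly}(k))$, and since $k=\lceil\sqrt{\log n}\rceil$ we have $k!\le k^{k}=2^{O(k\log k)}=2^{O(\sqrt{\log n}\,\log\log n)}=n^{o(1)}$, which is polynomial (indeed sub-polynomial) in $n$. This is exactly where the slow growth of the bound $\lceil\sqrt{\log n}\rceil$ is used: it prevents the single isomorphism test from blowing up. It is worth contrasting this with the naive membership attempt discussed above, where checking $H$-freeness directly would range over all $\binom{n}{|V(H)|}=n^{\Theta(\sqrt{\log n})}$ subsets and only give a quasi-polynomial bound; here the universal quantifier of $\Sigma^{\mathrm{P}}_2$ absorbs that enumeration, leaving only one subset $S$ to examine per call to $P$. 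Finally, padding $y$ and $z$ to a common polynomial length $q(|x|)$ places the problem in $\Sigma^{\mathrm{P}}_2$ in the exact form stated.
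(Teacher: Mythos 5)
Your proof is correct and follows the same basic plan as the paper's: a single $\exists\forall$ certificate whose existential part guesses either a proper $\lceil\sqrt{\log n}\rceil$-colouring or a candidate forbidden graph $H$, and whose universal part challenges the claim that $G$ is $H$-free. The one genuine difference is where the bijection between $H$ and a potential induced copy lives. The paper quantifies universally over \emph{ordered} tuples $v_1,\ldots,v_{|V(H)|}$ of vertices of $G$ and has the predicate check only whether the specific map $u_i\mapsto v_i$ preserves edges and non-edges; this makes the inner test trivially polynomial, with no growth-rate analysis whatsoever. You instead quantify over unordered subsets $S$ and place a brute-force isomorphism test $G[S]\cong H$ inside the predicate, which obliges you to prove your self-identified ``crux'' estimate $k!\le k^k=2^{O(\sqrt{\log n}\,\log\log n)}=n^{o(1)}$. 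That estimate is right, so your argument is sound, but it is an avoidable complication: the paper's formulation keeps the predicate polynomial for forbidden graphs of any size up to polynomial in $n$, whereas your brute-force test stops being polynomial once $|V(H)|$ reaches roughly $\log n/\log\log n$ (harmless here, but a real limitation of the technique). What your version buys in exchange is that the universal object is literally a subset, matching the definition of $H$-freeness directly, and it makes explicit how the slow growth of the bound $\lceil\sqrt{\log n}\rceil$ enters --- a point the paper's proof never needs to invoke.
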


\begin{proof}
We can verify whether an input $G$ is $\lceil \sqrt{\log n} \rceil$-colourable in NP. Let us explain how to verify if $G$ is $H$-free for some graph $H$ with $|V(H)|\leq  \lceil \sqrt{\log n} \rceil$. Plainly, we can guess existentially the graph $H$ whose vertices are ordered $u_1,\ldots,u_{|V(H)|}$. Now we guess universally $|V(H)|$ vertices $v_1,\ldots,v_{|V(H)|}$ in $G$. Finally, we test whether the respective map of $u_1,\ldots,u_{|V(H)|}$ to $v_1,\ldots,v_{|V(H)|}$ has the property that $u_iu_j$ is an edge in $H$ if and only if $v_iv_j$ is an edge in $G$. The latter can be accomplished in polynomial time and we are done.
\end{proof}

\section{PSPACE: Graph Colouring Reconfiguration}\label{s-ex2}

Let $G=(V,E)$ be a graph. A {\it clique} is a set of pairwise adjacent vertices in $G$. The set of neighbours of a vertex $v\in V$ is denoted by $N_G(v)=\{u\; |\; uv\in E\}$.
The {\it $k$-colouring reconfiguration graph}~$R_k(G)$ of $G$ is the graph whose vertices are $k$-colourings of $G$ and two vertices are adjacent if and only if the two corresponding $k$-colourings differ on exactly one vertex of $G$.  
In the following problem, $k$ is a fixed constant, that is, $k$ is not part of the input.

\problemdef{$k$-Colour-Path}{A graph $G$ with two $k$-colourings $\alpha$ and $\beta$.}{Does $R_k(G)$ contain a path from $\alpha$ to $\beta$?}

Bonsma and Cereceda~\cite{BC09} proved that {\sc $3$-Colour-Path} is polynomial-time solvable, but for $k\geq 4$ they showed the following result, which holds even for bipartite graphs and which we will need in the next section.

\begin{theorem}[\cite{BC09}]\label{t-bc}
For every integer $k\geq 4$, the {\sc $k$-Colour-Path} problem is \emph{PSPACE}-complete.
\end{theorem}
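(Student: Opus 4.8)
The plan is to prove the two directions of PSPACE-completeness separately, namely membership in PSPACE and PSPACE-hardness, and to arrange the hardness reduction so that it already outputs bipartite instances.

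First I would show that {\sc $k$-Colour-Path} lies in PSPACE. The reconfiguration graph $R_k(G)$ has up to $k^{|V(G)|}$ vertices, so it cannot be constructed explicitly. Instead I would guess a path from $\alpha$ to $\beta$ one recolouring step at a time, storing only the \emph{current} $k$-colouring, which needs $|V(G)|\lceil \log k\rceil$ bits, together with a step counter. Since a path, if one exists, may be taken to repeat no colouring, its length is at most $k^{|V(G)|}-1$, so the counter also fits in polynomially many bits. This places the problem in NPSPACE, and by Savitch's theorem NPSPACE $=$ PSPACE, which gives membership.

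For hardness I would reduce from a canonical PSPACE-complete reconfiguration problem whose instances already carry a natural reachability-between-configurations structure, such as Nondeterministic Constraint Logic, Sliding Tokens, or the reconfiguration of satisfying assignments of a Boolean formula. Choosing such a source avoids having to simulate quantifier alternation by hand, because the question whether $R_k(G)$ contains a path from $\alpha$ to $\beta$ matches the source question whether one configuration is reachable from another. The heart of the construction is then a family of gadgets translating source configurations into $k$-colourings and legal source moves into short sequences of single-vertex recolourings. The key device is that a vertex coloured $c$ whose neighbours already realize the other $k-1$ colours is \emph{frozen}: it cannot change until one of its neighbours moves first. By wiring gadgets so that recolouring a single control vertex unlocks exactly the recolourings corresponding to one admissible transition while keeping everything else frozen, I can force every reconfiguration path in $R_k(G)$ to trace out a legal sequence of source moves. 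Taking $\alpha$ and $\beta$ to encode the initial and target configurations, a path then exists in $R_k(G)$ if and only if the source instance is a yes-instance.

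The main obstacle will be making the gadgets faithful while respecting the two tight constraints in the statement: only $k=4$ colours are available, and the entire graph must be bipartite. Few colours leave little slack for the locking and unlocking behaviour, and bipartiteness forbids the most convenient rigid subgraphs, such as triangles and other odd cycles, so the gadgets must enforce their constraints using paths, even cycles, and carefully chosen two-sided colour assignments. The most delicate point is \emph{soundness} rather than completeness: one must rule out shortcut recolouring sequences that do not correspond to any source move, since such cheats would make $\alpha$ and $\beta$ connected in $R_k(G)$ even for no-instances. Once each gadget is verified to admit no such shortcut and the gadgets are shown to combine without interfering with one another, correctness follows, and since the construction has size polynomial in the source instance the reduction runs in polynomial time.
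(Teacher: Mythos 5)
This theorem is not proved in the paper at all: it is imported verbatim from Bonsma and Cereceda~\cite{BC09}, and the paper only reuses two of its ingredients later (the NPSPACE-plus-Savitch membership argument and the ``frozen vertex'' device, both inside the proof of Theorem~\ref{t-main1}). So your proposal can only be judged against the cited proof, and against that standard it has a genuine gap. Your membership argument is correct and is essentially the standard one (guess the path one recolouring at a time, store only the current colouring and a counter bounded by $k^{|V(G)|}$, then apply Savitch's theorem~\cite{Sa70}); this matches what the paper itself rehearses for its variant problem. The hardness half, however, is a plan rather than a proof. The choice of source problem (in~\cite{BC09} it is Sliding Tokens, proved PSPACE-complete via Nondeterministic Constraint Logic), the gadgets encoding configurations as colourings, the locking/unlocking mechanism, and above all the soundness argument excluding shortcut recolouring sequences are precisely what you defer with ``the main obstacle will be making the gadgets faithful.'' Naming the frozen-vertex idea is not a substitute for constructing gadgets that remain faithful with only four colours and verifying that no reconfiguration path can cheat; that construction and verification \emph{are} the theorem, and nothing in your sketch certifies they can be carried out.

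A second, smaller gap: the statement asserts PSPACE-completeness for \emph{every} $k\geq 4$, while your construction as described is pinned to $k=4$ (you call it one of ``the two tight constraints in the statement''). Hardness for $k=4$ does not transfer to larger $k$ for free, since additional colours can create new paths in the reconfiguration graph; one needs either a construction parameterized by $k$, as in~\cite{BC09}, or an extra device such as a $(k-4)$-clique joined to every vertex whose colours are frozen in both endpoint colourings --- incidentally the very trick the paper uses with the cliques $K$ and $L$ in the proof of Theorem~\ref{t-main1}. Finally, note that bipartiteness is not part of the statement you were asked to prove (it is only mentioned in the surrounding text), so building it into your gadget constraints adds difficulty without being required.
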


We define the following problem.

\problemdef{{\sc Colour-Path-or-Subgraph}}{an $n$-vertex graph $G$ with two $p$-colourings $\alpha$ and $\beta$ for $p=\lceil \sqrt{\log n} \rceil$.}{Does $R_p(G)$ contain a path from $\alpha$ to $\beta$, or does there exist a graph $H$ with $|V(H)|\leq p$ such that $G$ is $H$-free?}  
 
\begin{theorem}\label{t-main1}
{\sc Colour-Path-or-Subgraph} is \emph{PSPACE}-complete, but constant-time solvable for every hereditary graph class not equal to the class of all graphs.
\end{theorem}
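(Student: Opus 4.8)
The plan is to prove the two halves of the statement separately, leaning on the construction of Theorem~\ref{t-0} as far as possible. The constant-time half is essentially identical to the corresponding half of Theorem~\ref{t-0}: given a hereditary class $\mathcal{G}$ that forbids some fixed graph $H$ with $\ell=|V(H)|$, I would note that every $G\in\mathcal{G}$ on $n>2^{\ell^2}$ vertices satisfies $\ell<\lceil\sqrt{\log n}\rceil$, so $G$ is automatically a yes-instance (it is $H$-free for an $H$ of the permitted size), whereas instances with $n\leq 2^{\ell^2}$ have bounded size and are decided directly. The new content is the \emph{PSPACE-completeness} claim, which I would split into membership and hardness.

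For membership in PSPACE, I would argue that each disjunct is decidable in polynomial space. A $p$-colouring of $G$ is stored in $O(n\log p)$ bits and adjacency in $R_p(G)$ is decidable in polynomial time, so reachability between two explicitly given colourings is the textbook PSPACE reachability computation in an implicitly represented graph (guess the path step by step, keeping only the current colouring and a polynomially long step counter, and appeal to Savitch's theorem). For the subgraph disjunct, the same test as in the proof that \textsc{Colouring-or-Subgraph} lies in $\Sigma^{\mathrm{P}}_2$ shows it is in $\Sigma^{\mathrm{P}}_2\subseteq$ PSPACE; alternatively one enumerates the at most $2^{p^2}$ candidate graphs $H$ and, for each, all induced subgraphs of $G$ on $|V(H)|$ vertices, using polynomial-space counters. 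As PSPACE is closed under union, the problem lies in PSPACE.

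The main part, and the main obstacle, is PSPACE-hardness. I would reduce from \textsc{$4$-Colour-Path}, which is PSPACE-complete by Theorem~\ref{t-bc}. Starting from an instance $(G,\alpha,\beta)$ with $\alpha,\beta$ two $4$-colourings of an $n$-vertex graph $G$, I would build $G'$ on the template of Theorem~\ref{t-0}: set $p=\lceil\sqrt{\log 3n}\rceil$ (assuming, without loss of generality, $p\geq 4$), take $G'$ to be the disjoint union of an augmented graph $G^*$ and the gadget $H_1+\cdots+H_r$ comprising all graphs on exactly $p$ vertices, and pad with isolated vertices so that $p=\lceil\sqrt{\log|V(G')|}\rceil$; the size bound is exactly that of Theorem~\ref{t-0}. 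As there, the gadget forces $G'$ to contain every graph on at most $p$ vertices as an induced subgraph, so $G'$ is $H$-free for no $H$ with $|V(H)|\leq p$ and the subgraph disjunct is always false. Hence $(G',\alpha',\beta')$ is a yes-instance if and only if $R_p(G')$ contains an $\alpha'$--$\beta'$ path, and everything reduces to making this reachability faithfully simulate reachability in $R_4(G)$.

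The delicate point, absent from Theorem~\ref{t-0}, is that the extra colours $5,\ldots,p$ must not open up reconfiguration shortcuts, and my plan is to \emph{freeze} them. Concretely, I would let $G^*$ consist of $G$ together with $p-4$ vertices $c_1,\ldots,c_{p-4}$ made adjacent to every vertex of $G$, plus four further vertices $d_1,\ldots,d_4$, where $\{c_1,\ldots,c_{p-4},d_1,\ldots,d_4\}$ forms a clique of size $p$. In any $p$-colouring, each vertex of a $p$-clique sees all other $p-1$ colours on its neighbours and so can never be recoloured: the clique is frozen. Colouring $c_i$ with colour $4+i$ and $d_j$ with colour $j$, the frozen $c_i$ force every vertex of $G$ to use only the colours $\{1,2,3,4\}$ throughout any sequence starting at $\alpha'$, while the $d_j$ (not adjacent to $G$) impose no further restriction. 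Thus the colourings of $G^*$ reachable from $\alpha'$ are exactly the $4$-colourings of $G$ extended by the fixed clique, and a single recolouring step on $G^*$ is exactly a step of $R_4(G)$. I would set $\alpha'$ and $\beta'$ to extend $\alpha$ and $\beta$ by this frozen clique colouring and by one common proper colouring of the gadget and isolated vertices. Since those components are disjoint from $G^*$ and agree in $\alpha'$ and $\beta'$, reachability in $R_p(G')$ decomposes over components and collapses to reachability in the $G^*$-component. Combining these observations, $R_p(G')$ has an $\alpha'$--$\beta'$ path if and only if $R_4(G)$ has an $\alpha$--$\beta$ path, which completes the reduction; verifying both directions of this equivalence is the routine step I would carry out last.
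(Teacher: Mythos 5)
Your proposal is correct and follows essentially the same route as the paper's proof: PSPACE membership via Savitch plus brute-force enumeration of the candidate graphs $H$, and PSPACE-hardness by reduction from {\sc $4$-Colour-Path} using exactly the paper's construction (your $c_1,\ldots,c_{p-4}$ and $d_1,\ldots,d_4$ are the paper's cliques $K$ and $L$, with the same colourings, the same frozen-clique argument, and the same componentwise collapse of reachability onto $G^*$). The constant-time half likewise matches the paper verbatim.
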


\begin{proof}
Let $(G,\alpha,\beta)$ be an instance of {\sc Colour-Path-or-Subgraph}, 
where $G$ is a graph on $n$ vertices. 
Set $p=\lceil \sqrt{\log n} \rceil$.
We can check if $R_p(G)$ contains a path from $\alpha$ to $\beta$ using a polynomial amount of space using the same proof as used in Theorem~\ref{t-bc} for {\sc $4$-Colour-Path}~\cite{BC09}. So we first prove membership to NPSPACE.
As a certificate we can take a sequence of $p$-colourings of $G$ and check in polynomial space if this sequence 
is an $\alpha-\beta$ path in $R_p(G)$: check if the first $p$-colouring is $\alpha$; then check if the next $p$-colouring differs exactly at one place from the previous $p$-colouring and if so delete the previous $p$-colouring and continue; finally check if the last $p$-colouring is $\beta$. Now, as 
PSPACE$=$NPSPACE due to Savitch's Theorem~\cite{Sa70}, this part of the problem belongs to PSPACE.
Moreover, it also takes a polynomial amount of space to enumerate all graphs $H$ with $|V(H)|\leq  \lceil \sqrt{\log n} \rceil$ and check if $G$ is $H$-free by brute force. We conclude that {\sc Colour-Path-or-Subgraph} belongs to PSPACE.  

\medskip
\noindent
To prove PSPACE-hardness, we reduce from {\sc $4$-Colour-Path}, which is PSPACE-complete by Theorem~\ref{t-bc}.  Let $(G,\alpha,\beta)$ be an instance of {\sc $4$-Colour-Path}, where $G$ is an $n$-vertex graph and $\alpha$ and $\beta$ are $4$-colourings of $G$.
We now set 
$p=\lceil \sqrt{\log 3n} \rceil$. 
We may assume without loss of generality that $p\geq 5$.
From $(G,\alpha,\beta)$ we construct an instance $(G',\alpha',\beta')$ of {\sc Colour-Path-or-Subgraph}. We first define a graph 
$G^*$ as follows (see also Figure~\ref{f-fig2}): 

\begin{itemize}
\item take $G$;
\item add a clique $K$ of $p-4$ vertices $x_1,\ldots,x_{p-4}$;
\item make each vertex of $K$ adjacent to every vertex of $G$; 
\item add a clique $L$ of four vertices $y_1,\ldots,y_4$;
\item make each vertex of $L$ adjacent to every vertex of $K$ (so $K\cup L$ is a $p$-vertex clique and no vertex of $L$ is adjacent to a vertex of $G$);
\end{itemize}

\begin{figure} [h]
	\resizebox{12cm}{!} {
	\begin{tikzpicture}[main_node/.style={circle,draw,minimum size=1cm,inner sep=3pt]}]
	\node[main_node](g1) at (0,0){};
	\node[main_node](g2) at (1.5,0){$\ldots$};
	\node[main_node](g3) at (3,0){};
	\draw (-1,-1) rectangle (4,2);
	\node[main_node](p1) at (0,3){$x_1$};
	\node[main_node](p2) at (1.5,3){$\dots$};
	\node[main_node](p3) at (3,3){$x_{p-4}$};
	\node[main_node](l1) at (-1,5){$y_1$};
	\node[main_node](l2) at (0.5,5){$y_2$};
	\node[main_node](l3) at (2,5){$y_3$};
	\node[main_node](l4) at (3.5,5){$y_4$};
	\draw(l1)--(l2)--(l3)--(l4);
	\draw (l1) to[out=90, in=90] (l4);
	\draw(l1) to [out=60, in=120](l3);
	\draw(l2) to [out=45, in=135](l4);
	\draw(l1)--(p1);
	\draw(l1)--(p2);
	\draw(l1)--(p3);
	\draw(l2)--(p1);
	\draw(l2)--(p2);
	\draw(l2)--(p3);
	\draw(l3)--(p1);
	\draw(l3)--(p2);
	\draw(l3)--(p3);
	\draw(l4)--(p1);
	\draw(l4)--(p2);
	\draw(l4)--(p3);
	\node(cl) at (-2.5,4.5) {\Large$\alpha^{\prime} \equiv \beta^{\prime}$};
	\node(hc) at (8,1.5) {\Large $\alpha^{\prime} \equiv \beta^{\prime}$};
	\node(ck) at (-2, 3) {\Large $\alpha^{\prime} \equiv \beta^{\prime}$};
	\node(g) at (-2,0) {\Huge$G$};
	\node[circle, draw, minimum size=1.5cm](H1) at (5,0){\Huge$H_1$};
	\node[circle, draw, minimum size=1.5cm](H2) at (7,0){\Huge$H_2$};
	\node[circle, draw, minimum size=1.5cm](H3) at (9,0){$\dots$};
	\node[circle, draw, minimum size=1.5cm](H4) at (11,0){\Huge$H_r$};
	\draw(p1)--(g1);
	\draw(p1)--(g2);
	\draw(p1)--(g3);
	\draw(p2)--(g1);
	\draw(p2)--(g2);
	\draw(p2)--(g3);
	\draw(p3)--(g1);
	\draw(p3)--(g2);
	\draw(p3)--(g3);
	\draw(p1)--(p2)--(p3);
	\draw(p1) to [out=90, in=90] (p3);

	\end{tikzpicture}
}
	\caption{The graph $G^{\prime}$. 
	The graph $G^*$ is the connected component on the left.
	By construction, it holds that $\alpha'\equiv \beta'$ on $V(G')\setminus V(G)$.}\label{f-fig2}

\end{figure}
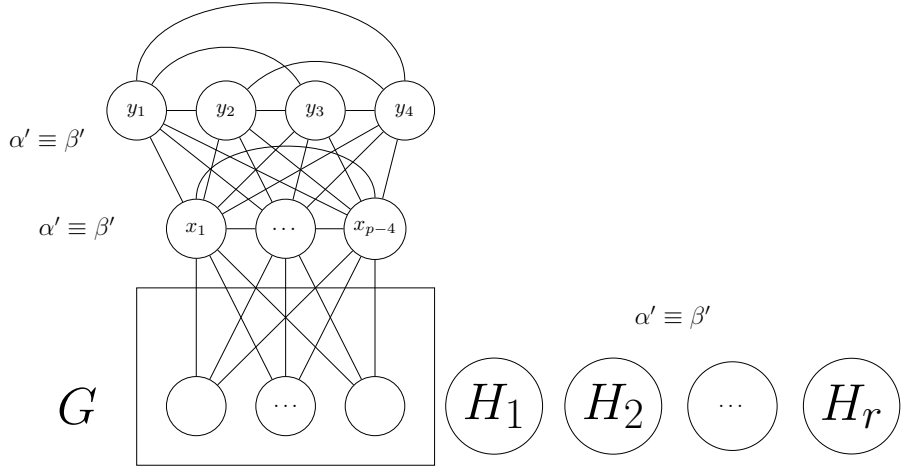

\noindent
Let $\{H_1,\ldots,H_r\}$ be the set of all graphs with exactly $p$ vertices and note that the number of vertices of
$H_1+\ldots + H_r$ is at most $p2^{\frac{p(p-1)}{2}} \leq \lceil \sqrt{\log 3n} \rceil \cdot \sqrt{3n}\leq n$ as $p\geq 5$.  
We now define the graph~$G'$ as the disjoint union of $G^*$ and the graphs $H_1,\ldots,H_r$;
see also Figure~\ref{f-fig2}.
 Note that:
$$|V(G')|=|V(G)|+|K|+|L|+|V(H_1)|+\ldots + |V(H_r)| \leq n+p-4+4+ n < 3n.$$
By adding isolated vertices we may assume that $|V(G')|=3n$.

We now define $\alpha'$ and $\beta'$ as $p$-colourings of $G'$:

\begin{itemize}
\item let $\alpha'=\alpha$ and $\beta'=\beta$ on $G$;
\item for $h\in \{1,\ldots,4\}$, let $\alpha'(y_h)=\beta'(y_h)=h$;
\item for $i\in \{1,\ldots,p-4\}$, let $\alpha'(x_i)=\beta'(x_i)=i+4$;
\item for $j\in \{1,\ldots, r\}$, let $V(H_j)=\{z_1^j,\ldots,z_s^j$\} and let for $q\in \{1,\ldots,s\}$, $\alpha'(z^j_q)=\beta'(z^j_q)=q$.
\end{itemize}

\noindent
We set $\alpha'(u)=\beta'(u)=1$ for each isolated vertex $u$ of $G'$ that we have not yet coloured. 
By construction, $\alpha'$ and $\beta'$ are $p$-colourings of $G'$, in particular because every $H_j$ has $p$ vertices.
We claim that $(G,\alpha,\beta)$ is a yes-instance of {\sc $4$-Colour-Path} if and only if $(G',\alpha',\beta')$ is a yes-instance of {\sc Colour-Path-or-Subgraph}.

First suppose that $(G,\alpha,\beta)$ is a yes-instance of {\sc $4$-Colour-Path}. Then there exists a path from $\alpha$ to $\beta$ in $R_4(G)$. 
We mimic this path in $R_p(G')$, as we can keep the colour $\alpha'(u)=\beta'(u)$ of each vertex $u$ of $G'$ that does not belong to $G$ the same.
Hence, $(G',\alpha',\beta')$ is a yes-instance of {\sc Colour-Path-or-Subgraph}.

Now suppose that $(G',\alpha',\beta')$ is a yes-instance of {\sc Colour-Path-or-Subgraph}. By construction, $G'$ contains every graph on $p$ vertices, and thus every graph on  at most $p$ vertices, as an induced subgraph. 
As $(G',\alpha',\beta')$ is a yes-instance of {\sc Colour-Path-or-Subgraph}
and $p=\lceil \sqrt{\log 3n} \rceil=\lceil \sqrt{\log |V(G')|} \rceil$,
this means that $R_p(G')$ contains
a path $\alpha'\gamma'_1 \cdots \gamma'_t\beta'$ from $\alpha'$ to $\beta'$.
As $\alpha'$ coincides with $\beta'$ on every $H_j$, we may assume without loss of generality that for every $i\in \{1,\ldots,t\}$ and every vertex $z$ of every graph $H_j$, $\gamma'_i(z)=\alpha'(z)=\beta'(z)$. Moreover, for every vertex~$v$ of the clique $K\cup L$, the set of colours used by both $\alpha'$ and $\beta'$ on the vertices of $N(v)\cup \{v\}=\{1,\ldots,p\}$. Hence, these vertices are ``frozen'', that is, we cannot change their colour, so for every $i\in\{1,\ldots,t\}$ and every $v\in K\cup L$ we have that $\gamma'_i(v)=\alpha'(v)=\beta'(v)$. 
Let $\gamma_i$ be the restriction of $\gamma_i'$ to $V(G)$. Then, from the above, we conclude that $\alpha\gamma_1\cdots \gamma_t\beta$ corresponds to a path from $\alpha$ to $\beta$ in $R_4(G)$. Hence, $(G,\alpha,\beta)$ is a yes-instance of {\sc $4$-Colour-Path}.

\medskip
\noindent
We now prove the second part of the theorem. 
Let ${\cal G}$ be a hereditary graph class that is not the class of all graphs.
Then there exists at least one graph $H$ such that every graph $G\in {\cal G}$ is $H$-free. 
Let $\ell=|V(H)|$. We claim that {\sc Colour-Path-or-Subgraph} is constant-time solvable for ${\cal G}$. 
Let $G\in {\cal G}$ be an $n$-vertex graph and let $\alpha$ and $\beta$ be two $p$-colourings of $G$.
If $n \leq 2^{\ell^2}$, then $G$ has constant size and the problem is constant-time solvable.
If $n>2^{\ell^2}$, then $$|V(H)|=\ell < \sqrt{\log n}  \leq \lceil \sqrt{\log n} \rceil.$$ Hence, $(G,\alpha,\beta)$ is a yes-instance of {\sc Colour-Path-or-Subgraph}, as $G$ is $H$-free and $H$ has at most $\lceil \sqrt{\log n} \rceil$ vertices.
\end{proof}

\section{NEXPTIME: Succinct Graph Colouring}\label{s-ex3}

A {\it Boolean circuit} $\phi(x_1,\ldots,x_m,y_1,\ldots,y_m)$ with $2m$ variables defines a graph $G$ on $2^m$ vertices, represented by vectors $(x_1,\ldots,x_m)$ of length~$m$,
according to the rule that there is an edge $(x_1,\ldots,x_m)(y_1,\ldots,y_m)$
between two vertices $(x_1,\ldots,x_m)$ and $(y_1,\ldots,y_m)$
if and only if $\phi(x_1,\ldots,x_m,y_1,\ldots,y_m)$ is true. This allows that some graph families with an exponential number of vertices $2^m$ can be expressed by circuits of size polynomial in $m$. Plainly, this can not be the case in general and indeed graphs whose vertex set is not of size a power of $2$ can only be expressed up to the addition of extra isolated vertices. 

Let us note how any graph on $n$ vertices can be expressed by a circuit with $2n$ variables that has size at most $2n^3$ in what we call the (naive) {\it longhand} method. We will apply this method in the proof of the result in this section.
The $n$ vertices are represented by vectors of length $n$, namely as
$$(1,0,\ldots,0,0),(0,1,\ldots,0,0), \ldots, (0,0,\ldots,0,1),$$ and all of the remaining $2^n-n$ vertices are isolated. If we have an edge $ij$, then this adds a new disjunction to the circuit of the form $$x_i \wedge y_j \wedge \bigwedge_{i \neq \ell \in [n]} \neg x_\ell \wedge \bigwedge_{j \neq \ell \in [n]} \neg y_\ell.$$ Thus, the circuit is in fact in disjunctive normal form.

Suppose we have a graph $G$ represented by a Boolean circuit $\phi(x_1,\ldots,x_m,y_1,\ldots,y_m)$. We can add $k$ vertices to it, again in a longhand way, by expanding the number of variables from $2m$ to $2(m+k)$. In line with our previous longhand method, all vertices other than those of the form  $(x_1,\ldots,x_m,0,\ldots0)$ (the original vertices of $G$) and the $k$ new vertices of the form
\[
\begin{array}{c}
(\overbrace{0,\ldots,0}^{\mbox{$m$ times}},\overbrace{1,0,\ldots,0,0}^{\mbox{$k$ times}})\\
\vdots \\
(\overbrace{0,\ldots,0}^{\mbox{$m$ times}},\overbrace{0,0,\ldots,0,1)}^{\mbox{$k$ times}}\\
\end{array}
\]
are isolated. It is known that the problem {\sc Succinct $3$-Colouring}, which takes as input a Boolean circuit $\phi(x_1,\ldots,x_m,y_1,\ldots,y_m)$ defining a graph $G$ on $2^m$ vertices, and has yes-instances precisely those such that $G$ is properly $3$-colourable, is NEXPTIME-complete. For a proof of this result together with a discussion on succinctly encoded problems we refer to~\cite{Pa94}. We wish to consider the following variant problem.

\problemdef{{\sc Succinct Colouring-or-Subgraph}}{a Boolean circuit $\phi(x_1,\ldots,x_m,y_1,\ldots,y_m)$ defining a graph $G$ on $2^m$ vertices}{is $G$ $\lceil \sqrt{\log m} \rceil$-colourable or $H$-free for some graph $H$ with $|V(H)|\leq  \lceil \sqrt{\log m} \rceil$?} 
Note that, relative to {\sc Colouring-or-Subgraph}, the number of vertices of the graph was mapped to half the number of variables in the circuit.

\begin{theorem}
The {\sc Succinct Colouring-or-Subgraph} problem is $\mathrm{NEXPTIME\mbox{-}complete}$, but constant-time solvable for every hereditary graph class not equal to the class of all graphs.
\end{theorem}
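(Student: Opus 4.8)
The plan is to prove NEXPTIME-membership and NEXPTIME-hardness, and then to dispose of the hereditary part, which is identical to the corresponding arguments in Theorems~\ref{t-0} and~\ref{t-main1}. Indeed, if $\cal G$ is a proper hereditary class and every graph in $\cal G$ is $H$-free with $\ell=|V(H)|$, then for every input circuit defining a graph on $2^m$ vertices with $m>2^{\ell^2}$ we have $\ell<\sqrt{\log m}\le\lceil\sqrt{\log m}\rceil$, so the instance is a yes-instance because $G$ is $H$-free; and when $m\le 2^{\ell^2}$ the circuit defines a graph on at most $2^{2^{\ell^2}}$ vertices, a constant, so the answer depends on only a constant-size graph. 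I therefore concentrate on completeness.

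For membership I would treat the two disjuncts separately, writing $p=\lceil\sqrt{\log m}\rceil$ and letting $N$ be the length of the input circuit $\phi$ (so $m=O(N)$). A $p$-colouring of $G$ assigns a colour of $\{1,\dots,p\}$ to each of the $2^m$ vertices and thus has size $2^{O(m)}=2^{O(N)}$; I guess it nondeterministically and verify it by evaluating $\phi$ on all $2^{2m}$ ordered pairs of vertices, so the colouring disjunct lies in \NP-exponential time. For the subgraph disjunct I use that ``$G$ is $H$-free for some $H$ with $|V(H)|\le p$'' is equivalent to ``$G$ fails to contain some graph on exactly $p$ vertices as an induced subgraph''. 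There are only $2^{\binom p2}=m^{1/2+o(1)}$ graphs on $p$ vertices, a number polynomial in $N$, and for each such pattern $H$ one can test induced-subgraph containment deterministically by trying all $\big(2^m\big)^{p}=2^{mp}$ ordered $p$-tuples of vertices; since $mp=O(N^{3/2})$ this is $2^{N^{O(1)}}$ time, so the subgraph disjunct is even decidable in deterministic EXPTIME. As NEXPTIME is closed under union and contains EXPTIME, the problem is in NEXPTIME.

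For hardness I would reduce from {\sc Succinct $3$-Colouring}. Given $\psi(x_1,\dots,x_m,y_1,\dots,y_m)$ defining $G$ on $2^m$ vertices, I reproduce the construction from Theorem~\ref{t-0} at the level of circuits: writing $p'$ for the final parameter, $G'$ is to consist of $G$, a clique $K$ of $p'-3$ vertices each adjacent to every vertex of $G$, the disjoint union $H_1+\dots+H_r$ of all $r=2^{\binom{p'}2}$ graphs on exactly $p'$ vertices, and isolated vertices. The isolated vertices are free, since a circuit on $2m'$ variables already describes $2^{m'}$ vertices and any unused encoding is isolated once $\phi$ outputs false on it. I introduce the $p'-3+p'\,r$ genuinely new vertices by the longhand expansion described above, enlarging the variable set from $2m$ to $2m'$ with $m'=m+(p'-3+p'\,r)$; the new vertices are the unit vectors on the fresh coordinates, and the vertices of $G$ are those whose fresh coordinates all vanish. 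Then $\phi$ is the disjunction of a guarded copy of $\psi$ (both endpoints original, i.e.\ all fresh coordinates zero), one longhand conjunction for each of the $\binom{p'-3}2$ edges of $K$ and each of the at most $r\binom{p'}2$ internal edges of the $H_j$, and one clause per vertex of $K$ making it adjacent to $G$. Correctness is exactly as in Theorem~\ref{t-0}: $G'$ contains every graph on $p'$ vertices, so the subgraph disjunct is false and $G'$ is a yes-instance iff it is $p'$-colourable; and the clique $K$, forced to use $p'-3$ distinct colours, makes $p'$-colourability of $G'$ equivalent to $3$-colourability of $G$.

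The main difficulty will be to keep $\phi$ of polynomial size, and it concentrates on the adjacency of $K$ to $G$: since each vertex of $K$ sees all $2^m$ vertices of $G$, the edge-by-edge longhand method fails, but this block of edges is captured by the single polynomial-size clause ``$x$ equals the $i$-th fresh unit vector and all fresh coordinates of $y$ vanish'' (plus its mirror image), which is the only genuine departure from the longhand recipe. A secondary, purely technical nuisance is the circular dependence between $p'$, the number of added vertices, and $m'=m+(p'-3+p'\,r)$, aggravated by the ceiling in $p'=\lceil\sqrt{\log m'}\rceil$; I would resolve it by noting that with $p'\approx\sqrt{\log m'}$ one has $r=(m')^{1/2+o(1)}$, so the number of added vertices is $o(m')$ and $m'=m\,(1+o(1))$ stays polynomial in $N$, and then pinning $\lceil\sqrt{\log m'}\rceil$ to a fixed admissible $p'\ge 4$ by appending a few dummy all-isolated variables, after which $r$, the patterns $H_j$, and $\phi$ are built for that fixed $p'$ in polynomial time, reusing $\psi$ verbatim.
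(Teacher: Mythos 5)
Your proposal is correct and follows essentially the same route as the paper's proof: NEXPTIME membership by guessing the colouring and handling the subgraph disjunct by brute-force enumeration in deterministic exponential time, hardness by redoing the construction of Theorem~\ref{t-0} at the circuit level with longhand encodings for the new vertices, a special polynomial-size clause for the complete join of the clique to $G$, and padding with dummy variables to control the value of $\lceil\sqrt{\log(\cdot)}\rceil$ (the paper pads the variable count to exactly $6m$ so that the parameter is $\lceil\sqrt{\log 3m}\rceil$, which is the same trick as your pinning argument), plus the identical hereditary-class argument.
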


\begin{proof}
We first argue for NEXPTIME membership. Let $G=(V,E)$ be a succinct graph on $2^m$ vertices that is defined by a Boolean circuit $\phi(x_1,\ldots,x_m,y_1,\ldots,y_m)$. Let $p=\lceil \sqrt{\log m} \rceil$.
The question as to whether $G$ is $p$-colourable can be solved in NEXPTIME by guessing the colouring and checking whether adjacent vertices are coloured distinctly. For checking whether $G$ is $H$-free for some graph $H$ with $|V(H)|\leq p$, it suffices to consider only graphs~$H$ on exactly $p$ vertices.
This can be answered, even in EXPTIME, by the following naive algorithm that checks all possibilities of choosing such a graph $H$ one by one. To analyze the running time of this algorithm we observe the following:

\begin{enumerate}
\item the number of graphs on $p$ vertices is at most $2^{\frac{p(p-1)}{2}} \leq m$; and\\[-10pt]
\item checking if a graph $H$ with $p$ vertices is isomorphic to an induced subgraph of $G$ takes $O(p^2|V|^p)
=O( {\log m} \cdot 2^{m \lceil \sqrt{\log m} \rceil}= O(2^{m^2})$ time (we can consider all mappings from $H$ to $G$ by brute force and for each of them we check if edges of $H$ map to edges of $G$).
\end{enumerate}

\noindent
Hence, the total running time of the naive algorithm for checking if $G$ has no graph $H$ on $p$ vertices is
$O(m2^{m^2})$. 

To prove NEXPTIME-hardness we reduce from {\sc Succinct $3$-Colouring}. 
Let $G$ be a succinct graph defined by a Boolean circuit $\phi(x_1,\ldots,x_m,y_1,\ldots,y_m)$. 
We now set
$p=\lceil \sqrt{\log 3m} \rceil$. Add $p-3$ pairwise adjacent vertices to $G$ in the longhand manner discussed above, so the increase in size is at most 
$2(p-3)^3\leq 2\log 3m \sqrt{\log 3m}$.  
We will make the new vertices also adjacent to every vertex of $G$. We can specify the latter by adding to $\phi$ a series of disjuncts for each $j \in \{1,\ldots,p-3\}$, each with $2(p-3)$ variables, encoding the conjunctions $$\bigwedge_{1\leq i \leq p-3} \neg x_{m+i} \wedge \neg \bigwedge_{i\neq j \in \{1,\ldots,p-3\}} \neg y_{m+i} \wedge y_{m+j},$$ of total size at most 
$2(p-3)^2\leq 2\log 3m$.

We now consider the disjoint union $G'$ of the new graph and all possible graphs on $p$ vertices.
Again we do this long hand, so for each graph on $p$ vertices we also add $2^p-p$ isolated vertices to $G'$.
This will require the addition of at most $2p \cdot 2^{\frac{p(p-1)}{2}}\leq 2m$ variables, for sufficiently large $m$, giving a size increase of at most
$16m^3$ as we work in longhand. 

The circuit $\phi'$ specifying  
$G'$ has at most $2m+2(p-3)+2m < 6m$ variables, let us make it up to precisely $6m$, half of which is $3m$. Furthermore, it is of size at most the size of $\phi$ plus $O(m^3)$. By construction, $G'$ contains every graph on $p$ vertices, and thus every graph on at most $p$ vertices, as an induced subgraph. Hence, we deduce in exactly the same way as in the proof of Theorem~\ref{t-0} that
$G'$ is a yes-instance of {\sc Succinct Colouring-or-Subgraph} if and only if $G'$ is $p$-colourable, and that the latter holds if and only if $G$ is $3$-colourable.

\medskip
\noindent
We now prove the second part of the theorem. We do this in the same way as before. 
Let ${\cal G}$ be a hereditary graph class that is not the class of all graphs.
Then there exists at least one graph $H$ such that every graph $G\in {\cal G}$ is $H$-free.  
Let $\ell=|V(H)|$.
We claim that {\sc Succinct Colouring-or-Subgraph} is constant-time solvable for ${\cal G}$. Let $G\in {\cal G}$ be a graph given by a Boolean circuit $\phi(x_1,\ldots,x_m,y_1,\ldots,y_m)$ which has $2^m$ vertices.
If $m \leq 2^{\ell^2}$, then $G$ has constant size and the problem is constant-time solvable.
If $m>2^{\ell^2}$, then $$|V(H)|=\ell < \sqrt{\log m} \leq \lceil \sqrt{\log m} \rceil.$$ Hence $G$ is a yes-instance of {\sc Succinct Colouring-or-Subgraph}, as $G$ is $H$-free and $H$ has at most $\lceil \sqrt{\log m} \rceil$ vertices.
\end{proof}

\end{document}